\newtheorem{theorem}{Theorem}
\numberwithin{equation}{section}
\theoremstyle{plain}
\newtheorem{claim}[theorem]{Claim}
\theoremstyle{definition}
\theoremstyle{remark}
\newtheorem*{remark*}{Remark}
\newcommand{\tr}{\mathop{\operatorname{tr}}}
\newcommand{\id}{\mathop{\operatorname{id}}}
\newcommand{\colk}{\cellcolor{gray!20}}
\newcommand{\colr}{\cellcolor{rgb:red,10;white,30;blue,2}}
\newcommand{\colb}{\cellcolor{blue!25}}
\newcommand{\colc}{\cellcolor{cyan!20}}
\newcommand{\colm}{\cellcolor{green!30}}
\newcommand{\coly}{\cellcolor{yellow!35}}
\definecolor{orang}{RGB}{255,203,122}
\newcommand{\colg}{\cellcolor{orang}}
\begin{document}

\title{Strassen's $2 \times 2$ matrix multiplication algorithm:\\A conceptual perspective}

\author[1]{Christian Ikenmeyer}
\author[2]{Vladimir Lysikov}
\affil[1]{Max Planck Institute for Software Systems, Saarland Informatics Campus}
\affil[2]{Department of Computer Science, Saarland University}
\affil[ ]{\small\href{mailto:cikenmey@mpi-sws.org}{\texttt{cikenmey@mpi-sws.org}}\quad \href{mailto:vlysikov@cs.uni-saarland.de}{\texttt{vlysikov@cs.uni-saarland.de}}}

\date{\today}

\maketitle

\begin{abstract}
The main purpose of this paper is pedagogical.
  
Despite its importance, all proofs of the correctness of Strassen's famous 1969
algorithm to multiply two $2 \times 2$ matrices with only seven multiplications
involve some basis-dependent calculations such as explicitly multiplying
specific $2 \times 2$ matrices, expanding expressions to cancel terms with
opposing signs, or expanding tensors over the standard basis.
This makes the proof nontrivial to memorize and many presentations of
the proof avoid showing all the details and leave a significant amount of
verifications to the reader.

In this note we give a short, self-contained, basis-independent
proof of the existence of Strassen's algorithm that avoids these types of
calculations.
We achieve this by focusing on symmetries and algebraic properties.

Our proof can be seen as a coordinate-free version of the construction
of Clausen from 1988, combined with recent work on the geometry of Strassen's
algorithm by Chiantini, Ikenmeyer, Landsberg, and Ottaviani from 2016.
\end{abstract}

\section{Introduction}

The discovery of Strassen's matrix multiplication algorithm~\cite{Strassen:69}
was a breakthrough result in computational linear algebra.
The study of fast (subcubic) matrix multiplication algorithms initiated
by this discovery has become an important area of research
(see~\cite{Blaeser:13} for a survey and~\cite{LeGall:14} for the currently
best upper bound on the complexity of matrix multiplication).
Fast matrix multiplication has countless applications as a subroutine
in algorithms for a wide variety of problems,
see e.g.~\cite[\S 16]{BCS:97} for numerous applications in computational linear algebra.
In practice, algorithms more sophisticated than Strassen's are almost never
implemented, but Strassen's algorithm is used for multiplication of large
matrices
(see~\cite{Dumas-Pan:16,Pan:17,HRMvdG:17} on practical fast
matrix multiplication).

The core of Strassen's result is an algorithm for multiplying
$2 \times 2$ matrices with only $7$ multiplications instead of $8$.
It is a \emph{bilinear} algorithm, which means that it arises from a
decomposition of the form
\begin{equation}
\label{eq:bilinear}
  XY = \sum_{k = 1}^7 u_k(X) v_k(Y) W_k, \tag{$\star$}
\end{equation}
where $u_k$ and $v_k$ are cleverly chosen linear forms on the space of $2 \times 2$
matrices and $W_k$ are seven explicit $2 \times 2$ matrices.
Because of this structure it can be applied to block matrices, and its
recursive application results in an algorithm for the multiplication of two $n \times n$ matrices
using $O(n^{\log_2 7})$ arithmetic operations
(see~\cite[\S 15.2]{BCS:97} or~\cite{Blaeser:13} for details).

Because of the great importance of Strassen's algorithm,
our goal is to understand it on a deep level.
In Strassen's original paper, the linear forms $u_k$, $v_k$, and the matrices $W_k$ are given,
but the verification of the correctness of the algorithm is left to the reader.
Unfortunately, such a description does not yield many further immediate insights.

Shortly after Strassen's paper, Gastinel~\cite{Gastinel:71} published a proof
of the existence of decomposition~\eqref{eq:bilinear} using simple algebraic
transformations that is much easier to follow and verify.
Many other papers provide alternative descriptions of Strassen's algorithm or
proofs of its existence.
Brent~\cite{Brent:70} and Paterson~\cite{Paterson:74} present the algorithm
in a graphical form using $4 \times 4$ diagrams indicating which elements
of the two matrices are used.
A more formal version of these diagrams are matrices of linear forms, which are used,
for example, by Fiduccia~\cite{Fiduccia:72} (essentially the same proof
appears in~\cite{Yuval:78}), Brockett and Dobkin~\cite{Brockett-Dobkin:73}
and Lafon~\cite{Lafon:75}.
Makarov~\cite{Makarov:75} gives a proof
that uses ideas of Karatsuba's algorithm for the efficient multiplication of polynomials.
B{\"u}chi and Clausen~\cite{Buechi-Clausen:85} connect the existence of
Strassen's algorithm to the existence of special bases of the space of
$2 \times 2$ matrices in which the multiplication table has a specific
structure (their results are more general and apply not only to matrix
multiplication).
Alexeyev~\cite{Alekseyev:96} describes several algorithms for matrix
multiplication as embeddings of the matrix algebra into a $7$-dimensional
nonassociative algebra with a special properties.

Verification of these proofs usually requires simple, but lengthy computations:
expansion of explicit decompositions in some basis, multiplication of several
matrices or following chains of algebraic transformations in which careful attention to details is required.
To obtain a more conceptual proof of the existence of Strassen's algorithm,
we do not focus on the explicit algorithm, but on the algebraic
properties of the $2 \times 2$ matrices, their transformations and 
symmetries of Strassen's algorithm.
It is well-known that the decomposition~\eqref{eq:bilinear} is not unique.
Given one decomposition, we can obtain another one by applying the identity
\[
XY = A^{-1} \left[ (A X B^{-1}) (B Y C^{-1}) \right] C
\]
and using the original decomposition for the product in the square brackets.
Alternatively, we can talk about $2 \times 2$ matrices as linear maps between
$2$-dimensional vector spaces.
Any choice of bases in these vector spaces gives a new bilinear algorithm.
De Groote~\cite{deGroote:78b} proved that the algorithm with seven
multiplications is unique up to these transformations (this result is also
announced without a proof in~\cite{Pan:72}, see also~\cite{Pan:14}).
Thus, Strassen's algorithm is unique in this sense and there should be
a coordinate-free description of this algorithm which does not use explicit
matrices.
One such description is given in~\cite{CILO:17} and the proof of its
correctness uses the fact that matrix multiplication is the unique
(up to scale) bilinear map invariant under the transformations described above.
This is a nontrivial fact which requires representation theory to prove.
Moreover, the verification of the correctness in~\cite{CILO:17} is left to the reader.

Symmetries of Strassen's algorithm are also useful for its understanding.
Clausen~\cite{Clausen:88} gives a description of Strassen's algorithm in
terms of special bases, as in~\cite{Buechi-Clausen:85}, and notices that
the elements of these bases form orbits under the action of the symmetric group
$S_3$ on the space of $2 \times 2$ matrices defined via conjugation with specific
matrices, i.\,e., Strassen's algorithm is invariant under this action.
Clausen's construction is also describled in~\cite[Ch.1]{BCS:97}.
Grochow and Moore~\cite{Grochow-Moore:16,Grochow-Moore:17} generalize Clausen's
construction to $n \times n$ matrices using other finite group orbits.
Another symmetry is only apparent in the trilinear representation of the
algorithm: the decompositions~\eqref{eq:bilinear} are in one-to-one
correspondence with decompositions of the trilinear form $\tr(XYZ)$ of the form
\begin{equation*}
\label{eq:trilinear}
  \tr(XYZ) = \sum_{k = 1}^7 u_k(X) v_k(Y) w_k(Z)
\end{equation*}
where $u_k$, $v_k$ and $w_k$ are linear forms.
The decomposition corresponding to Strassen's algorithm is then invariant
under the cyclic permutation of matrices $X, Y, Z$.
This symmetry is exploited in the proof of Chatelin~\cite{Chatelin:86},
which uses properties
of polynomials invariant under this symmetry.
He also notices the importance of a matrix which is related to the $S_3$
symmetry discussed above.
The symmetries of Strassen's algorithm are explored in detail
in~\cite{Burichenko:14,CILO:17}. Several earlier publications note their
importance~\cite{Gates-Kreinovich:01,Paterson:09}.
The paper~\cite{BILR:17} explores symmetries of algorithms for $3 \times 3$
matrix multiplication.

In this paper we provide a proof of Strassen's result which is
\begin{itemize}
\item
  \emph{coordinate-free} --- we do not use explicit matrices, which allows us
  to focus on the algebraic properties required to prove the correctness of
  the algorithm. We avoid all tedious explicit calculations, in particular any
  expansions of expressions and any verification of explicit sign cancellations.
  Our proof can be seen as a coordinate-free version of Clausen's construction.
\item
  \emph{elementary} --- our proof uses only simple facts from basic linear
  algebra and does not require knowledge of representation theory.
  This is also why we do not use tensor language.
  Proofs from~\cite{CILO:17} and~\cite{Grochow-Moore:17} are based on more
  complicated mathematics and may offer other insights.
\end{itemize}

Formally, the result that we prove is the following.
\begin{theorem}[Strassen \cite{Strassen:69}]\label{thm:main}
Fix any field $\mathbb F$.
There exist
fourteen linear forms $u_1,\ldots,u_7, v_1,\ldots,v_7 \colon \mathbb F^{2 \times 2} \to \mathbb F$
and seven matrices $W_1,\ldots, W_7 \in \mathbb F^{2 \times 2}$ such that
for all pairs of $2 \times 2$ matrices $X$ and $Y$
the product satisfies
\[
  XY = \sum_{k=1}^7 u_k(X) v_k(Y) W_k. \tag{$\star$}
\]
\end{theorem}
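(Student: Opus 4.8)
The plan is to first turn the asymmetric bilinear statement into a fully symmetric trilinear one, where the cyclic structure of the problem becomes visible. The trace pairing $\langle A, B\rangle = \tr(AB)$ on $\IF^{2\times 2}$ is nondegenerate over every field: in the standard matrix units its Gram matrix is a permutation matrix with determinant $\pm 1$, independent of the characteristic. Hence every linear form on $\IF^{2\times 2}$ is of the shape $Z \mapsto \tr(WZ)$ for a unique matrix $W$, so prescribing the matrices $W_k$ is the same as prescribing linear forms $w_k(Z) = \tr(W_k Z)$, and by nondegeneracy the identity $(\star)$ holds for all $X, Y$ if and only if $\tr(XYZ) = \sum_{k=1}^7 u_k(X)\, v_k(Y)\, w_k(Z)$ holds for all $X, Y, Z$. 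I would therefore aim to decompose the cyclically symmetric trilinear form $T(X,Y,Z) = \tr(XYZ)$ into seven triple products of linear forms.

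The next step is to assemble the coordinate-free algebra of $2 \times 2$ matrices that is meant to replace entrywise manipulation. The central object is the adjugate $\hat X = \tr(X)\, I - X$, which I would establish to be a linear involution and an anti-automorphism, $\widehat{XY} = \hat Y\, \hat X$; equivalently, the polarized Cayley--Hamilton identity
\[
  XY + YX = \tr(X)\, Y + \tr(Y)\, X - \beta(X,Y)\, I
\]
holds, where $\beta(X,Y) = \tr(X)\tr(Y) - \tr(XY)$ is the symmetric bilinear form polarizing the determinant, i.e.\ $\beta(X,Y) = \det(X+Y) - \det X - \det Y$. These are the only nontrivial facts I expect to use, and crucially they are valid over every field, so no division — in particular no inversion of $2$ or $3$ — is ever required.

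With $T$ in hand, I would search for a decomposition respecting the cyclic symmetry $T(X,Y,Z) = T(Y,Z,X) = T(Z,X,Y)$, organizing the seven terms into one cyclically invariant term plus two orbits of size three under the $\mathbb Z/3$-action, mirroring the $1+3+3$ orbit structure of Clausen and of Chiantini--Ikenmeyer--Landsberg--Ottaviani. Concretely, I would fix a single pattern of linear forms built only from the trace and the trace pairing with the adjugate, apply the cyclic shift to generate the two triples of off-diagonal terms, and choose the invariant term to absorb the remainder. The correctness check then has to be reducible to the algebraic identities of the previous paragraph rather than to any expansion in coordinates.

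The hard part will be exactly this third step: exhibiting the seven forms explicitly enough to verify the identity while keeping everything basis-free and uniform in the characteristic. Naive splittings — for instance writing $XY$ as its symmetric and antisymmetric parts — reintroduce a factor $\tfrac12$ and, worse, blow up the count, since contributing a full matrix $Y$ to the product already costs four rank-one terms (the form $\tr(X)\tr(YZ)$ alone has rank four in $Y,Z$). The whole point is that the correct seven forms must interlock through $XY + YX = \tr(X)\, Y + \tr(Y)\, X - \beta(X,Y)\, I$ so that one multiplication is saved globally rather than locally. I expect the delicate verification to be showing that the cyclically generated forms, together with the single invariant term, recombine to exactly $\tr(XYZ)$ using only the adjugate identities, and confirming that the construction still behaves correctly in characteristics $2$ and $3$, where the relevant symmetry orbits degenerate.
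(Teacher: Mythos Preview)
Your proposal is a plan, not a proof: you set up the trilinear reformulation and the polarized Cayley--Hamilton identity correctly, you correctly anticipate a $1+3+3$ orbit structure under the cyclic $\mathbb Z/3$-action, and then you stop exactly at the point where the content lies. You never write down the seven linear forms, and you yourself flag that ``the hard part will be exactly this third step.'' As written there is nothing to verify.

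There is also a genuine obstruction to the route you sketch. You say you want the seven forms ``built only from the trace and the trace pairing with the adjugate.'' But the adjugate $\hat X=\tr(X)I-X$ and the bilinear form $\beta(X,Y)=\tr(X)\tr(Y)-\tr(XY)$ are invariant under simultaneous conjugation, and the only linear form on $\IF^{2\times 2}$ with that invariance is (a scalar multiple of) the trace. One linear form cannot furnish seven independent ones; some auxiliary, symmetry-breaking data has to be introduced, and your outline never says what that data is. The identity $XY+YX=\tr(X)Y+\tr(Y)X-\beta(X,Y)I$ by itself will not manufacture the missing forms.

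The paper closes this gap in a different way from what you sketch. It does not work with the trilinear form or the adjugate at all. Instead it fixes once and for all a matrix $D$ with $\tr D=-1$, $\det D=1$ (so $D^3=\id$ and $\id+D+D^{-1}=0$) together with a vector $u$ that is not an eigenvector of $D$, and sets $M=uu^\perp$. Then $\{D,\,M,\,D^{-1}MD,\,DMD^{-1}\}$ and $\{D^{-1},\,M,\,D^{-1}MD,\,DMD^{-1}\}$ are two bases of $\IF^{2\times 2}$, and the entire proof is the observation that the $4\times 4$ multiplication table between these bases contains only seven distinct matrices, each repeated pair lying in a common row or column; the decomposition $(\star)$ is read off directly. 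The $\mathbb Z/3$-symmetry you aim for is realized concretely as conjugation by $D$, not as the cyclic shift of $\tr(XYZ)$. If you want to push your trilinear approach through, you will likewise need to commit to some explicit auxiliary object playing the role of $D$.
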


\paragraph{\bf Acknowledgements.}
The authors thank Alin Bostan, Joshua Grochow and anonymous referees
for comments and pointers to the literature.

\section{Preliminaries from linear algebra}
The \emph{trace} $\tr(A)$ of a square matrix $A$ is the sum of its diagonal entries.
If $\tr(A)$ is zero, then the matrix $A$ is called \emph{traceless}.
Taking the trace of a product of (rectangular) matrices is invariant under cyclic shifts: $\tr(A_1 A_2 \cdots A_n) = \tr(A_2 \cdots A_n A_1)$.
As a consequence, the trace of a matrix is invariant under conjugations: $\tr(B^{-1}AB) = \tr(ABB^{-1}) = \tr(A)$.
Another implication is that if $u$ is a column vector and $v^T$ is a row vector, then $v^T u = \tr(v^T u) = \tr(u v^T)$.

The \emph{characteristic polynomial} of a $2 \times 2$ matrix $A$ is $\lambda^2 - \tr(A)\lambda + \det(A)$.
The Cayley---Hamilton theorem says that substituting $A$ for $\lambda$ yields the zero matrix.

\section{Rotational symmetry}\label{sec:rotationalsymmetry}
In this section we collect some standard facts about rotation matrices.
We think of the $2 \times 2$ matrix $D$ as a rotation of the plane by $120^\circ$,
but to make our approach work over every field we use a more algebraic definition for $D$.

Let $D$ have determinant $1$ and trace $-1$, that is, $D$ has characteristic polynomial \mbox{$\lambda^2 + \lambda + 1$}.
We assume that $D$ is not a multiple of the identity $\id$ (this is implicitly satisfied if the characteristic is not 3).
For example, we could choose $D = \begin{bmatrix}
                      0 & -1 \\
                      1 & -1
                      \end{bmatrix}$,
the matrix that cyclically permutes the three vectors $\begin{pmatrix}1\\0\end{pmatrix}$, $\begin{pmatrix}0\\1\end{pmatrix}$, $\begin{pmatrix}-1\\-1\end{pmatrix}$.

\begin{claim}\label{claim:sum}
  The matrix $D$ has the following properties: $D^3 = \id$ and $\id + D + D^{-1} = 0$ and $\tr(D^{-1})=-1$.
\end{claim}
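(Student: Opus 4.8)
The plan is to extract all three assertions from a single relation furnished by the Cayley--Hamilton theorem. Since $D$ was defined to have characteristic polynomial $\lambda^2 + \lambda + 1$, substituting $D$ for $\lambda$ yields the matrix identity
\[
  D^2 + D + \id = 0,
\]
and I would treat this as the sole computational input, deriving everything else from it by purely formal manipulation over the matrix ring.

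For $D^3 = \id$, I would multiply the relation on the left by $D$ to get $D^3 + D^2 + D = 0$ and subtract the original relation; the terms $D^2 + D$ cancel and leave $D^3 - \id = 0$. (Equivalently one multiplies by $D - \id$ and invokes the factorization $(\lambda - 1)(\lambda^2 + \lambda + 1) = \lambda^3 - 1$, but the subtraction avoids even that.) For the second identity I first note that $\det(D) = 1 \neq 0$, so $D$ is invertible; multiplying $D^2 + D + \id = 0$ by $D^{-1}$ then gives $D + \id + D^{-1} = 0$, which is exactly the claimed $\id + D + D^{-1} = 0$. Finally, taking the trace of this last equation and using linearity of the trace together with $\tr(\id) = 2$ and $\tr(D) = -1$ produces $2 + (-1) + \tr(D^{-1}) = 0$, hence $\tr(D^{-1}) = -1$.

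I expect no genuine obstacle here: the content is entirely encapsulated in the Cayley--Hamilton relation, and the rest is bookkeeping. The only points requiring a moment's care are that all steps are valid over an arbitrary field $\IF$ (they use only the ring structure of $2 \times 2$ matrices and the invertibility of $D$, both of which hold regardless of the field), and that the invertibility of $D$ needed for the second identity is guaranteed by $\det(D) = 1$.
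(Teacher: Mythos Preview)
Your proof is correct and essentially identical to the paper's: both start from Cayley--Hamilton to get $D^2 + D + \id = 0$, multiply by $D$ to obtain $D^3 = \id$, deduce $\id + D + D^{-1} = 0$, and take the trace for the last assertion. The only cosmetic difference is that the paper reads off $D^{-1} = D^2$ from $D^3 = \id$ and substitutes, whereas you justify invertibility via $\det(D)=1$ and multiply the Cayley--Hamilton relation by $D^{-1}$ directly.
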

\begin{proof}
The characteristic polynomial of $D$ is $\lambda^2 + \lambda + 1$. By the Cayley---Hamilton theorem $D^2 + D + \id = 0$.
Multiplying by $D$ we obtain $D+D^2+D^3=0=\id+D+D^2$ and hence $D^3 = \id$.
Therefore $D^{-1}=D^2$ and thus $\id + D + D^{-1} = 0$.
This implies $\tr(D^{-1}) = - \tr(\id) - \tr(D) = -1$.
\end{proof}

For every column vector $u$ define $u^{\perp}$ as the row vector satisfying conditions $u^{\perp} u = 0$ and $u^{\perp} D u = 1$.
If $u$ is not an eigenvector of $D$, then $u$ and $Du$ are linearly independent, so $u^{\perp}$ is uniquely defined.
If, on the other hand, $u$ is an eigenvector of $D$, the two conditions are inconsistent and $u^{\perp}$ is undefined.

We fix a vector $u$ that is not an eigenvector of $D$ and define $u^{\perp}$ as above.
In our example we could choose $u=\begin{pmatrix}1\\0\end{pmatrix}$, which is not an eigenvector of
$\begin{bmatrix}
  0 & -1 \\
  1 & -1
\end{bmatrix}$.

A first simple observation relates $u^\perp$ and $(Du)^{\perp}$:

\begin{claim}\label{claim:perp}
  $u^\perp D^{-1} = (D u)^\perp$.
\end{claim}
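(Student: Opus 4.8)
The plan is to avoid any explicit computation and instead exploit the \emph{uniqueness} built into the definition of the $\perp$-operation. Recall that for a vector $w$ which is not an eigenvector of $D$, the row vector $w^\perp$ is characterized as the \emph{unique} solution of the two linear conditions $w^\perp w = 0$ and $w^\perp D w = 1$. So to prove $u^\perp D^{-1} = (Du)^\perp$ it suffices to show that the candidate row vector $u^\perp D^{-1}$ satisfies the two conditions that define $(Du)^\perp$, namely $(Du)^\perp (Du) = 0$ and $(Du)^\perp D(Du) = 1$.

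Both checks are immediate once we cancel $D^{-1}D = \id$ and invoke the two defining properties of $u^\perp$. For the first condition, $\bigl(u^\perp D^{-1}\bigr)(Du) = u^\perp u = 0$. For the second, $\bigl(u^\perp D^{-1}\bigr) D (Du) = u^\perp D u = 1$. Thus $u^\perp D^{-1}$ meets both defining requirements, and by uniqueness it must equal $(Du)^\perp$.

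The only point that needs a word of justification is that $(Du)^\perp$ is actually well-defined, i.e.\ that $Du$ is not an eigenvector of $D$. This follows from our standing hypothesis that $u$ is not an eigenvector: if $Du$ were an eigenvector, then applying the invertible map $D^{-1}$ to the eigenvector equation would force $u$ itself to be an eigenvector of $D$, a contradiction. Beyond this, there is no real obstacle here — the statement is a one-line consequence of the definitions, and the whole content of the proof is choosing to argue via the uniqueness characterization rather than through coordinates, which keeps the argument in the basis-free spirit of the paper.
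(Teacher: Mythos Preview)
Your proof is correct and follows essentially the same approach as the paper: verify the two defining conditions for $(Du)^\perp$ and conclude by uniqueness. The paper's own proof omits the explicit check that $Du$ is not an eigenvector, so your added justification on that point is a welcome clarification rather than a deviation.
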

\begin{proof}
We need to verify the two defining properties for $(D u)^\perp$. We have $(u^\perp D^{-1})(D u) = u^\perp u = 0$ and $(u^\perp D^{-1}) D (D u) = u^\perp D u = 1$ as required.
\end{proof}

The following observation complements the fact that $u^\perp D u =1$.

\begin{claim}\label{claim:xdm}
$u^\perp D^{-1} u=-1$.
\end{claim}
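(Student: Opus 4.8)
The plan is to reduce the quantity $u^\perp D^{-1} u$ to the two defining properties of $u^\perp$ by eliminating the inverse $D^{-1}$ altogether. The crucial ingredient is the linear relation $\id + D + D^{-1} = 0$ established in Claim~\ref{claim:sum}, which lets me express $D^{-1}$ as a linear combination of matrices whose behaviour against $u$ and $u^\perp$ I already control. Solving that relation gives $D^{-1} = -\id - D$, and this is the only nontrivial step: once $D^{-1}$ is written without an inverse, everything collapses to known evaluations.

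Concretely, I would substitute this expression and expand by linearity:
\[
  u^\perp D^{-1} u = u^\perp(-\id - D)u = -\,u^\perp u - u^\perp D u.
\]
Now I invoke the two defining conditions of $u^\perp$ directly, namely $u^\perp u = 0$ and $u^\perp D u = 1$, which yields $u^\perp D^{-1} u = -0 - 1 = -1$, as claimed.

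I do not anticipate a genuine obstacle here; the statement is essentially a one-line corollary of Claim~\ref{claim:sum} together with the definition of $u^\perp$. The only thing to be careful about is that the rewriting $D^{-1} = -\id - D$ is legitimate, which it is precisely because Claim~\ref{claim:sum} guarantees $D$ is invertible (indeed $D^{-1} = D^2$) and satisfies $\id + D + D^{-1} = 0$. An alternative route would chain Claim~\ref{claim:perp} to write $u^\perp D^{-1} u = (Du)^\perp u$, but this expression is not one of the defining evaluations of $(Du)^\perp$, so it would require further work; the direct substitution is cleaner and avoids introducing $(Du)^\perp$ at all.
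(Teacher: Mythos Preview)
Your argument is correct and is essentially the paper's own proof: both multiply the identity $\id + D + D^{-1} = 0$ from Claim~\ref{claim:sum} by $u^\perp$ on the left and $u$ on the right and then invoke the defining conditions $u^\perp u = 0$ and $u^\perp D u = 1$. The only cosmetic difference is that you first isolate $D^{-1} = -\id - D$ before sandwiching, whereas the paper sandwiches the full relation and then reads off the value.
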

\begin{proof}
  Using Claim~\ref{claim:sum} we have $\id+D+D^{-1}=0$ and thus \[u^\perp u+u^\perp Du+u^\perp D^{-1}u=0.\]
  Since $u^\perp u = 0$ and $u^\perp D u = 1$, the claim follows.
\end{proof}

\section{Seven multiplications suffice}
In this section we apply our structural insights from Section~\ref{sec:rotationalsymmetry} to prove Theorem~\ref{thm:main}.
We set $M := u u^\perp$. Clearly $\tr(M) = u^\perp u = 0$
and we obtain the following identities that can be used to simplify products of $M$, $D$, and $D^{-1}$:
\begin{claim}\label{claim:xdx}
$M^2 = 0$ and $MDM = M$ and $M D^{-1} M = -M$.
\end{claim}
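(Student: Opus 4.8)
The plan is to exploit the fact that $M = uu^\perp$ is an outer product, so that every sandwich of the form $MAM$ collapses to a scalar multiple of $M$. The single observation driving all three identities is that for any $2\times 2$ matrix $A$,
\[
  MAM = (uu^\perp)A(uu^\perp) = u\,(u^\perp A u)\,u^\perp = (u^\perp A u)\,uu^\perp = (u^\perp A u)\,M,
\]
where the third equality uses that $u^\perp A u$ is a $1\times 1$ matrix, i.e.\ a scalar, and therefore commutes with the column vector $u$. This step uses nothing beyond associativity of matrix multiplication; no matrix entries are ever written down.

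Given this identity, each of the three claims reduces to reading off a single scalar. First I would take $A = \id$: since $M^2 = M\,\id\,M = (u^\perp u)\,M$ and $u^\perp u = 0$ by the defining property of $u^\perp$, we obtain $M^2 = 0$. Next I would take $A = D$: here $u^\perp D u = 1$, again by the definition of $u^\perp$, so $MDM = M$. Finally I would take $A = D^{-1}$: now the relevant scalar is $u^\perp D^{-1} u$, which is exactly the quantity computed in Claim~\ref{claim:xdm} to be $-1$, giving $MD^{-1}M = -M$.

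There is no genuine obstacle here. The only input beyond the definition of $u^\perp$ is the value $u^\perp D^{-1} u = -1$, and that was already established in Claim~\ref{claim:xdm} (itself a consequence of $\id + D + D^{-1} = 0$ from Claim~\ref{claim:sum}). All of the conceptual content sits in the sandwiching identity $MAM = (u^\perp A u)\,M$; once it is in place, the three statements are simply the three cases $A \in \{\id, D, D^{-1}\}$ with scalars $0$, $1$, and $-1$. In keeping with the coordinate-free spirit of the paper, I would present the argument purely through this identity rather than by multiplying out any explicit $2\times 2$ matrices.
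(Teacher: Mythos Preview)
Your proof is correct and is essentially the same as the paper's: the paper computes each of the three products directly as $u(u^\perp u)u^\perp$, $u(u^\perp D u)u^\perp$, and $u(u^\perp D^{-1} u)u^\perp$ and reads off the scalars $0$, $1$, $-1$, invoking Claim~\ref{claim:xdm} for the last. Your only addition is to package the three computations into the single identity $MAM = (u^\perp A u)M$ before specializing, which is a nice touch but not a different argument.
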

\begin{proof}
\begin{eqnarray*}
M^2 &=& (u u^\perp) (u u^\perp) = u (u^\perp u) u^\perp = 0.\\
MDM &=& (u u^\perp) D (u u^\perp) = u (u^\perp D u) u^\perp = u u^\perp = M.\\
MD^{-1}M &=& (u u^\perp) D^{-1} (u u^\perp) = u (u^\perp D^{-1} u) u^\perp = -u u^\perp = -M,
\end{eqnarray*}
where in the last line we used Claim~\ref{claim:xdm}.
\end{proof}
By Claim~\ref{claim:sum}, conjugation with $D$ is a map of order $3$ on the vector space of all $2 \times 2$ matrices, i.e., for any matrix $A$ there is a triple of conjugates $A \mapsto D^{-1}AD \mapsto DAD^{-1} \mapsto A$.
Moreover, if $A$ is traceless, then so are its conjugates.

\begin{claim}\label{claim:conj-indep}
   The matrices $M$, $D^{-1}MD$, and $DMD^{-1}$ form a basis of the vector space of traceless matrices.
\end{claim}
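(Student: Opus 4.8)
The plan is to reduce the statement to linear independence and then to test a hypothetical linear relation against the single vector $u$, which decouples the three coefficients almost immediately.

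First I would recall that the space of traceless $2 \times 2$ matrices is $3$-dimensional, being the kernel of the nonzero linear functional $\tr$, and that all three matrices $M$, $D^{-1}MD$, $DMD^{-1}$ are traceless: indeed $\tr(M) = u^\perp u = 0$, and conjugation preserves the trace, as noted just before the claim. Since we then have three vectors in a $3$-dimensional space, it suffices to prove that they are linearly independent.

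To this end, suppose $\alpha M + \beta D^{-1}MD + \gamma DMD^{-1} = 0$ for scalars $\alpha, \beta, \gamma \in \IF$. The key idea is to apply both sides to the column vector $u$ and to exploit the rank-one factorizations $M = u u^\perp$, $D^{-1}MD = (D^{-1}u)(u^\perp D)$, and $DMD^{-1} = (Du)(u^\perp D^{-1})$. Using the defining relations $u^\perp u = 0$ and $u^\perp D u = 1$ together with Claim~\ref{claim:xdm} (which gives $u^\perp D^{-1} u = -1$), I would compute $Mu = u(u^\perp u) = 0$, then $(D^{-1}MD)u = (D^{-1}u)(u^\perp D u) = D^{-1}u$, and $(DMD^{-1})u = (Du)(u^\perp D^{-1}u) = -Du$. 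Thus, the $\alpha$-term having vanished, the relation collapses to $\beta\, D^{-1}u - \gamma\, Du = 0$.

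It then remains to show that $Du$ and $D^{-1}u$ are linearly independent, which forces $\beta = \gamma = 0$ and, via $M \neq 0$, also $\alpha = 0$. This is the only step requiring a small argument, and the one I expect to be the main (if minor) obstacle: since $u$ is not an eigenvector of $D$, the vectors $u$ and $Du$ are independent, and applying the invertible map $D^{-1}$ to the pair $\{Du, D^{-1}u\}$ yields $\{u, D^{-2}u\} = \{u, Du\}$, using $D^{-2} = D$ from Claim~\ref{claim:sum}; hence $Du$ and $D^{-1}u$ are independent as well. Everything else is immediate. I would emphasize that evaluating the relation at $u$, rather than sandwiching it between the three matrices, avoids introducing any factor of $2$, so the argument works uniformly over every field, including characteristic $2$.
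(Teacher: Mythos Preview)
Your argument is correct and works uniformly over every field, exactly as you note.  It is, however, a genuinely different route from the paper's.  The paper does not test a linear relation against~$u$; instead it first observes that $\{u,Du\}$ and $\{u^\perp,u^\perp D^{-1}\}$ are bases of column and row space respectively, so that the four rank-one products $M$, $MD^{-1}$, $DM$, $DMD^{-1}$ form a basis of $\IF^{2\times 2}$.  Two of the three conjugates, $M$ and $DMD^{-1}$, are already members of this basis, and the paper checks via Claim~\ref{claim:sum} that $D^{-1}MD=(\id+D)M(\id+D^{-1})$ is the sum of all four, hence independent of any two of them.  Your approach is a bit more direct and computational (evaluate at $u$, read off coefficients); the paper's approach is slightly more structural in that it exhibits an ambient basis of the full matrix space, although that extra information is not reused later.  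Both avoid any characteristic restriction.
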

\begin{proof}
Since $M$ is traceless, its conjugates are also traceless.
Hence it is enough to prove that $M$, $D^{-1}MD$ and $DMD^{-1}$ are linearly independent.

  Since $u$ is not an eigenvector of $D$, the vectors $u$ and $Du$ are linearly independent and thus form a basis of the space of column vectors.
  The row vectors $u^{\perp}$ and $u^{\perp} D^{-1} = (Du)^{\perp}$ (Claim~\ref{claim:perp}) are orthogonal to $u$ and $Du$, respectively.
  Therefore they form a basis of the space of row vectors.
  Thus, the four matrices
  \[
    u \cdot u^\perp = M,\quad u \cdot u^\perp D^{-1} = MD^{-1},\quad D u \cdot u^\perp = DM, \quad D u \cdot u^\perp D^{-1} = DMD^{-1}
  \]
  form a basis of the space of $2 \times 2$ matrices.
  The matrices $M$ and $DMD^{-1}$ are contained in this basis.
  Adding up all four matrices, we get $(\id + D) M (\id + D^{-1})$, which can be simplified to $(-D^{-1}) M (-D) = D^{-1}MD$ using Claim~\ref{claim:sum}.
  Therefore the matrices $M$, $DMD^{-1}$, $D^{-1}MD$ are linearly independent.
\end{proof}

Since $D$ and $D^{-1}$ have trace $-1 \neq 0$ (Claim~\ref{claim:sum}),
adding $D$ or $D^{-1}$ to the basis in Claim~\ref{claim:conj-indep} yields two bases for the full space of $2 \times 2$ matrices: $\{ D, M, D^{-1}MD, DMD^{-1} \}$ and $\{ D^{-1}, M, D^{-1}MD, DMD^{-1} \}$.

Using the properties $D^2 = D^{-1}$, $D^{-2} = D$ and $M^2 = 0$ from Claim~\ref{claim:sum} and Claim~\ref{claim:xdx}, we can write down the multiplication table with respect to these two bases.
We further simplify it using the identities $MDM = M$ and $MD^{-1}M = -M$ from Claim~\ref{claim:xdx}.

\begin{center}
\begin{tabular}{|c|c|c|c|c|}
  \hline
             & $D^{-1}$  & $M$         & $D^{-1}MD$  & $DMD^{-1}$             \\
  \hline
  $D$        & \colk $\id$     & \colm $DM$        & \colc $MD$        & \colb \begin{minipage}{1.65cm}\medskip$D^{-1}MD^{-1}$\medskip\end{minipage}        \\
  \hline
  $M$        & \colr $MD^{-1}$ & $0$         & \colc \begin{minipage}{1.65cm}{$MD^{-1}MD$ \\$= -MD$}\end{minipage} & \colr \begin{minipage}{1.6cm}{\medskip$MDMD^{-1}$ \\ $= MD^{-1}$}\medskip\end{minipage}  \\
  \hline
  $D^{-1}MD$ & \colg $D^{-1}M$ & \colg \begin{minipage}{1.6cm}{$D^{-1}MDM$\\ $= D^{-1}M$}\end{minipage} & $0$         & \colb \begin{minipage}{2.7cm}{\medskip$D^{-1}MD^{-1}MD^{-1}$ \\ $ = -D^{-1}MD^{-1}$}\medskip\end{minipage} \\
  \hline
  $DMD^{-1}$ & \coly $DMD$     & \colm \begin{minipage}{1.6cm}{$DMD^{-1}M$\\ $= -DM$}\end{minipage} & \coly \begin{minipage}{1.6cm}{\medskip$DMDMD$\\$ = DMD$}\medskip\end{minipage}     & $0$                    \\
  \hline
\end{tabular}
\end{center}

\begin{proof}[Proof of Theorem~\ref{thm:main}]

Notice that in the body of the table only (scalar multiples of) $7$ matrices are used, and the entries are aligned in such a way that two occurrences of the same matrix are either in the same row or in the same column.
At this point we are done proving Theorem~\ref{thm:main}, because the existence of such a pattern gives a simple way to construct a matrix multiplication algorithm as follows.
To multiply matrices $A$ and $B$, represent them in the bases  $\{ D, M, D^{-1}MD, DMD^{-1} \}$ and $\{ D^{-1}, M, D^{-1}MD, DMD^{-1} \}$, respectively:

\begin{equation}\label{eq:AB}
\begin{alignedat}{5}
  X & =  x_1 D      && + x_2 M && + x_3 D^{-1}MD && + x_4 DMD^{-1} \\
  Y & =  y_1 D^{-1} && + y_2 M && + y_3 D^{-1}MD && + y_4 DMD^{-1}
\end{alignedat}
\end{equation}

Note that the $x_i$ are linear forms in the entries of $X$ and the $y_j$ are linear forms in the entries of $Y$.
We expand the product $XY$ and group together summands according to the table:

\begin{equation*}
\begin{array}{r c c c c c}
  XY = & x_1 & \times & y_1 & \times & \colk \id \\
     + & x_2 & \times & (y_1 + y_4) & \times & \colr MD^{-1} \\
     + & x_3 & \times & (y_1 + y_2) & \times & \colg D^{-1}M \\
     + & x_4 & \times & (y_1 + y_3) & \times & \coly DMD     \\
     + & (x_1 - x_4) & \times & y_2 & \times & \colm DM       \\
     + & (x_1 - x_2) & \times & y_3 & \times & \colc MD       \\
     + & (x_1 - x_3) & \times & y_4 & \times & \colb D^{-1}MD^{-1} \\
       &  \uparrow      &       & \uparrow & & \uparrow \\
       &  u_k(X)        &       & v_k(Y)   & & W_k
\end{array}
\end{equation*}
This finishes the proof.
\end{proof}

\begin{remark*}
Taking the trace in \eqref{eq:AB} and using the fact that $M$ and its conjugates are traceless, we see that $\tr(X)=x_1 \tr(D) = -x_1$, and $\tr(Y)=-y_1$.
Thus the first of the 7 summands is $\tr(X)\tr(Y)\id$.
\end{remark*}

\providecommand{\noopsort}[1]{}

\end{document}